\journal{Information Sciences}
\theoremstyle{definition}
\newtheorem{propr}{Property}
\newtheorem{thm}{Teorema}[section]
\newtheorem{proposition}[thm]{Proposition}
\newtheorem{definition}[thm]{Definition}
\DeclareMathOperator{\argmax}{argmax}
\begin{document}

\begin{frontmatter}

\title{A new approach in model selection for ordinal target variables}

\author{Elena Ballante}
\address{Department of Mathematics, University of Pavia}
\address{Via Ferrata 5, 27100 Pavia, Italy}
\author{Pierpaolo Uberti}
\address{Department of Economics, University of Genova}
\address{Via Vivaldi 5, 16126 Genova, Italy}
\author{Silvia Figini\corref{correspondingauthor}}
\cortext[correspondingauthor]{Corresponding author}
\ead{silvia.figini@unipv.it}
\address{Department of Political and Social Sciences, University of Pavia}
\address{Corso Strada Nuova 65, 27100 Pavia, Italy}



%
%
%

\begin{abstract}
This paper introduces a novel approach to assess model performance for predictive models characterized by an ordinal target variable in order to satisfy the lack of suitable tools in this framework. Our methodological proposal is a new index for model assessment which satisfies mathematical properties and can be easily computed. \\
In order to show how our performance indicator works, empirical evidence achieved on a toy examples and simulated data are provided. On the basis of results at hand, we underline that our approach discriminates better for model selection with respect to performance indexes proposed in the literature. 

\end{abstract}

\begin{keyword}
Ordinal classification \sep Performance metric \sep Model selection
\end{keyword}

\end{frontmatter}


\section{Introduction}
\label{intro}
Evaluation measures are widely used in predictive models to compare different algorithms, thus providing the selection of the best model for the data at hand. \\
Performance indicators can be used to assess the performance of a model in terms of accuracy, discriminatory power and stability of the results. The choice of indicators to made model selection is a fundamental point and many approaches have been proposed over the years (see e.g. \cite{Adams,Bradley,Hand2009}).\\
Restricting to binary target variables, distinct criteria for comparing the performance of classification models are available (see \cite{Hand1997, Hand2000, review, AccuracyFROC}).\\
Multi-class classification models are generally evaluated averaging binary classification indicators (see \cite{AUCmulticlass, review, perfMeas}) and in the literature there is not a clear distinction among them with respect to multi-class nominal and ordinal targets (e.g. \cite{simpleapproach,Gaudette,Pang}).\\
While in the model definition stage for ordinal target variable there are different approaches in the literature (see \cite{agresti,ordinal1,ordinal2,ordinal3}), for the model selection there is a lack of adequate tools (\cite{performance}). \\
In our opinion, performance indicators should take into account the nature of the target variable, especially when the dependent variable is ordinal. This leads us to propose a new class of measures to select the best model in predictive contexts characterized by a multi-class ordinal target variable, using the misclassification errors coupled with a measure of uncertainty on the prediction. \\
The paper is structured as follow: Section \ref{literature} reviews the metrics most used in literature; Section \ref{definition} shows our methodological proposal and proves some mathematical properties; Section \ref{toy} explains how our proposal works in two toy examples; Section \ref{simulation} reports the empirical evidence obtained on simulated data. Conclusions and further ideas for research are summarized in Section \ref{conclusions}. 

\section{Review of the literature for ordinal dependent variable}
\label{literature}
The most popular measures of performances in ordinal predictive classification models are based on AUC (Area Under the ROC curve), accuracy (expressed in terms of correct classification) and MSE (Mean Square Error) (see \cite{Gaudette} and \cite{contr} among others). 
The accuracy (percentage of correct predictions over total instances) is the most used evaluation metric for binary and multi-class classification problems (\cite{AccuracyFROC}), assuming that the costs of the different misclassifications are equal.\\
The AUC for multi-class classification is defined in \cite{AUCmulticlass} as a generalization of the AUC (based on the probabilistic definition of AUC); it suffers of different weaknesses also in the binary classification problem (\cite{ROCint}) and it is cost-independent, assumption that can be viewed as a weakness when the target is ordinal.\\
The mean square error (MSE) measures the difference between prediction values and observed values in regression problems using an Euclidean distance. 
MSE can be used in ordinal predictive models, converting the classes of the ordinal target variable $y$ in integers and computing the difference between them and it does not takes into account the ordering in a predictive model characterized by ordinal classes in the response variable.\\
Furthermore, it is well known that in imbalanced data characterized by under-fitting or over-fitting the mean square error could provide trivial results (see \cite{review}). \\

\section{A new index for model performances evaluation and comparison for ordinal target}
\label{definition}
Let $\mathbf{y}=\{y_1,..,y_N\}$ be a test set for the ordinal target variable $Y$, where $y_i\in \{1,...,M\}$ (with $M$ number of classes ordered of the target variable) and let $\mathbb{X}$ be the $N \times p$  data matrix, where $N$ is the number of observations and $p$ the number of covariates.\\
The output of a predictive model is a matrix $P=\{p_{ij}\}$, where $0\leq p_{ij} \leq 1$, which contains the probability that observation $i$ belong to the class $j$, estimated by the model under evaluation.\\
Standard multi-class classification rules assign the observation $i$ to the class $j=\argmax_l\{p_{i,l}\}$. \\ 
In order to introduce our proposal, the definitions of classification function and error interval are required.	

\begin{definition}[Classification function]
	\label{class_func} 
	Let observations $\{1,...,N\}$ grouped by the estimated classes $\hat{y_i}=j$. For each class, sort the observations in a non-increasing order with respect to $p_{i,j}$. The vector of indexes $i$ of the observations is a permutation of the original vector, according to the ordering defined above.
	For a given model, the classification function is a piecewise constant function $f_{mod}:[0,1]\to \{1,..,M\}$ such that $f_{mod}([\frac{i-1}{N},\frac{i}{N}))=y_i$ for $i \in \{1,...,N\}$.	\\
	
\end{definition}

As a special case, the \textit{perfect classification function}, is a piecewise constant function $f_{exact}:[0,1]\to \{1,..,M\}$ such that each estimated class corresponds to the real class identified by $\mathbf{y}$.\\
Note that the function $f_{exact}$ is unique except for permutation of the observations in the same estimated class.\\

The error interval in each class can be derived as the interval between the first misclassified observation and the end of the observations in that estimated class. 
\begin{definition}[Error Interval]
	\label{error}
	Suppose that the range corresponding to the estimated class $j$ is $[n_{j-1},n_j)$, let $\tilde{i_j} \in \{n_{j-1},...,n_j\}$ the first misclassified observation. So the error interval is defined as $[\frac{\tilde{i_j}}{N},\frac{n_j}{N})$ and its length is $e_j=\frac{n_j-\tilde{i_j}}{N}$.\\
	If no misclassification occurs in $[n_{j-1},n_j)$, the error interval is defined as an empty set and the length is $e_j=0$.
\end{definition}

Consider, for example, $N=10$ observations and a three levels target variable ($M=3$). Suppose that a predictive model returns the predictions as in Table \ref{table:output}. For each observation, the real class is reported.

\begin{table}[!ht]
	\begin{center}
		\begin{tabular}{c|c|c|c|c|c}
			Observation	& \multicolumn{3}{c|}{Probabilities} & Estimated Class & Real Class\\
			& Class 1 & Class 2 & Class 2 &   &\\
			\hline
			1 & 0.288 & 0.174 & \textbf{0.538} & 3 & 1\\
			2 & 0.325 & \textbf{0.478} & 0.197 & 2 & 2\\
			3 & \textbf{0.828} & 0.013 & 0.159 & 1 & 1\\
			4 & 0.310 & 0.106 & \textbf{0.584} & 3 & 3\\
			5 & 0.120 & 0.262 & \textbf{0.618} & 3 & 3\\
			6 & \textbf{0.426} & 0.167 & 0.407 & 1 & 3\\
			7 & \textbf{0.849} & 0.126 & 0.025 & 1 & 2\\
			8 & \textbf{0.520} & 0.401 & 0.079 & 1 & 1\\
			9 & 0.147 & \textbf{0.670} & 0.183 & 2 & 2\\
			10 & 0.142 & \textbf{0.593} & 0.265 & 2 & 3\\  
			
		\end{tabular}
	\end{center}
	\caption{Example}
	\label{table:output}
\end{table}

The classification function is derived grouping the observations in the estimated class as: \{3,6,7,8\} in Class 1, \{2,9,10\} in Class 2 and \{1,4,5\} in Class 3.
In each group the observations are sorted with respect to the probability of the estimated class. For the group 1 the probabilities are 0.828, 0.426, 0.849, 0.520 respectively, then the ordered group is: \{7,3,8,6\}. Following the same rule the group 2 becomes \{9,10,2\} and group 3 \{5,4,1\}.\\
The final sequence of observations can be written as in Table \ref{table:ordered}.

\begin{table}[!ht]
	\begin{center}
		\begin{tabular}{|c|c|c|c|c|c|c|c|c|c|c|}
			\hline
			i & 7 & 3 & 8 & 6 & 9 & 10 & 2 & 5 & 4 & 1\\
			\hline
			$\tilde{i}$ & 1 & 2 & 3 & 4 & 5 & 6 & 7 & 8 & 9 & 10 \\
			\hline
			x & 0 & 0.1 & 0.2 & 0.3 & 0.4 & 0.5 & 0.6 & 0.7 & 0.8 & 0.9 \\
			\hline
			y & 2 & 1 & 1 & 3 & 2 & 3 & 2 & 3 & 3 & 1 \\
			\hline
			$\hat{y}$ & 1 & 1 & 1 & 1 & 2 & 2 & 2 & 3 & 3 & 3 \\	
			\hline
		\end{tabular}
	\end{center}
	\caption{Index construction}
	\label{table:ordered}
\end{table}

The classification function and the corresponding perfect classification function are depicted in Figure \ref{fig:function} and Figure \ref{fig:perffunction} respectively.\\
\begin{figure}[!htb]
	\begin{minipage}{0.48\textwidth}
		\includegraphics[width=0.9\textwidth]{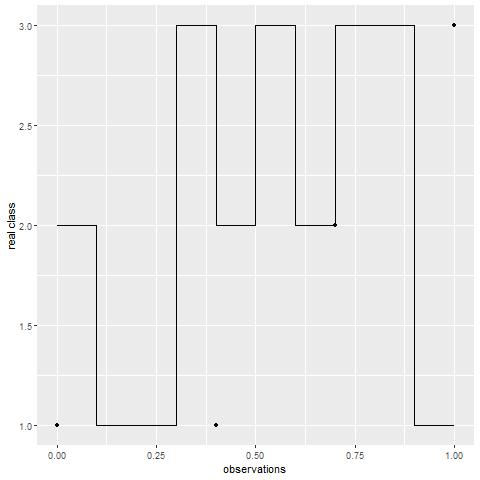}
		\caption{Classification function}\label{fig:function}
	\end{minipage}\hfill
	\begin{minipage}{0.48\textwidth}
		\includegraphics[width=0.9\textwidth]{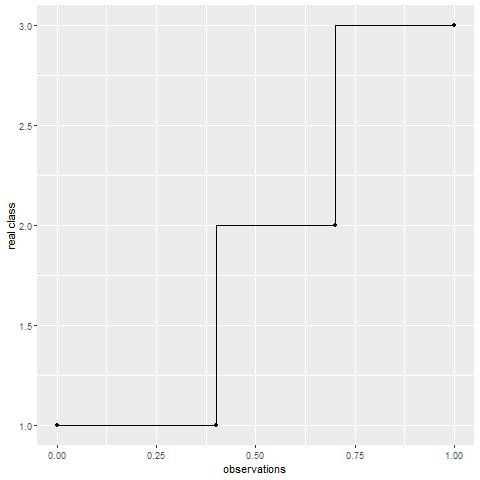}
		\caption{Perfect classification function}\label{fig:perffunction}
	\end{minipage}
\end{figure}

In order to define the three error intervals, as a preliminary step we identify the intervals of observations related to each estimated class: $[0,0.4)$ for Class 1, $[0.4,0.7)$ for Class 2,  $[0.7,1)$ for Class 3. From Table \ref{table:ordered}, in the estimated Class 1 the first error corresponds to the first observation, so the error interval is $[0,0.4)$, in the estimated Class 2 the first error corresponds to the observation 6, then the error interval is $[0.5,0.7)$ and in the estimated Class 3 the first error corresponds to the observation 10 and the error interval is $[0.9,1)$.  \\

Starting from Definition \ref{class_func} and Definition \ref{error}, Definition \ref{index} introduces a new index for model performance evaluation in predictive models characterized by an ordinal target variable.

\begin{definition}[Index]
	\label{index}
	\begin{equation*}
	I=\sum_{j=1}^{M} w_j\int_{\frac{n_{j-1}}{N}}^{\frac{n_j}{N}} |(f_{mod}(x)-f_{exact}(x))|dx
	\end{equation*}
	where  $l_j=n_j-n_{j-1}$ is the length of the $j^{th}$ class in the domain, $w_j=\frac{e_j}{l_j}$ and $0\leq w_j \leq 1$.
\end{definition}
On the basis of the previous example, we can compute the value for the index introduced in Definition \ref{index}: the three integral results are (0.3, 0.1, 0.2)  and the corresponding weights are (1, 0.67, 0.33), thus $ I=0.433$.\\
The index satisfies the following properties.
\label{prop}
\begin{propr}
	$I \in [0,+\infty)$. \\
	$I=0$ if and only if $f_{mod}=f_{exact}.$
\end{propr}

\begin{proof}
	\begin{equation*}
	I=\sum_{j=0}^{M-1} w_j\int_{\frac{n_{j-1}}{N}}^{\frac{n_j}{N}} |(f_{mod}-f_{exact})(x)|dx \geq \sum_{j=0}^{M-1}\frac{n_j - \tilde{i}_j}{N} |f_{mod}-f_{exact}| \frac{n_j-n_{j-1}}{N}
	\end{equation*}
	and
	\begin{itemize}
		\item	$n_j \geq \tilde{i}_j$, 
		\item	$n_j>n_{j-1}$
	\end{itemize} 
	by definition, than we can conclude that $I \geq 0$.\\
	We prove also that $I=0$ if and only if $f_{mod}=f_{exact}$.\\	
	
	$I=0$ $\implies$ $w_j=0$ or $\int_{\frac{n_{j-1}}{N}}^{\frac{n_j}{N}} |(f_{mod}-f_{exact})(x)|dx =0 \quad \forall$ $j$ in $\{1,...,M-1\}$.\\
	\begin{itemize}
		\item $w_j=0  \Longleftrightarrow \tilde{i}_j=n_j$, i.e there are not classification errors, so $f_{mod}=f_{exact}$ in class $j$.\\
		\item $\int_{\frac{n_{j-1}}{N}}^{\frac{n_j}{N}} |(f_{mod}-f_{exact})(x)|dx =0 \Longleftrightarrow f_{mod}=f_{exact}$ in the class $j$.
	\end{itemize}
	
	So we can conclude that $I=0 \implies f_{mod}=f_{exact}$.\\
	The other implication is trivial.
\end{proof}

\begin{propr}
	$I$ has a sharp upper bound $M-1$\\
	The upper bound $M-1$ is reached if and only if $M=2$ (binary classification).
\end{propr}
\begin{proof}
	\begin{equation*}
	\begin{split}
	I &=\sum_{j=0}^{M-1} w_j\int_{\frac{n_{j-1}}{N}}^{\frac{n_j}{N}} |(f_{mod}-f_{exact})(x)|dx \leq \sum_{j=0}^{M-1} 1 \cdot \int_{\frac{n_{j-1}}{N}}^{\frac{n_j}{N}} |(f_{mod}-f_{exact})(x)|dx \leq \\ & \leq \max_{x}{|(f_{mod}-f_{exact})(x)|} \sum_{j=0}^{M-1}\frac{n_j-n_{j-1}}{N} \leq M-1
	\end{split}
	\end{equation*}
	If $M=2$ we obtain $|(f_{mod}-f_{exact})(x)|=1$ $\forall x \in [0,1]$ so that $I=M-1$.
	If $M>2$, $|(f_{mod}-f_{exact})(x)|>1$ for at least one class (by construction) the inequality is strict.
\end{proof}

\begin{proposition}\label{prop_max}
	$I \leq K$,\\
	where $K$ is defined as
	\begin{equation*}
	K=\sum_{i=1}^{M} l_i \max\{M-i,i-1\}
	\end{equation*}
\end{proposition}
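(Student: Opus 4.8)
The plan is to bound the index term by term, exploiting the same two ingredients used for the global bound $M-1$ in the previous property, but sharpening the pointwise estimate class by class. Recall from Definition \ref{index} that each weight satisfies $w_j \le 1$, so it suffices to control each integral $\int_{n_{j-1}/N}^{n_j/N} |f_{mod}(x) - f_{exact}(x)|\,dx$ over the interval associated to estimated class $j$, and then sum the contributions.

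The key step is the observation that the perfect classification function is \emph{constant} on each class interval: on $[\frac{n_{j-1}}{N}, \frac{n_j}{N})$ one has $f_{exact}(x) = j$, since by construction this interval collects exactly the observations assigned by the model to class $j$. The model function $f_{mod}(x)$, on the other hand, returns the true label $y_i \in \{1,\dots,M\}$ of the observation sitting at position $x$. Hence on this interval the integrand equals $|y_i - j|$, and as $y_i$ ranges over $\{1,\dots,M\}$ its largest possible value is
\begin{equation*}
\max_{y \in \{1,\dots,M\}} |y - j| = \max\{\, j-1,\; M-j \,\}.
\end{equation*}
This is the per-class refinement of the crude estimate $\max_x|f_{mod}-f_{exact}| \le M-1$ used in the previous property.

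Combining the two ingredients, I would estimate
\begin{equation*}
I = \sum_{j=1}^{M} w_j \int_{\frac{n_{j-1}}{N}}^{\frac{n_j}{N}} |f_{mod}(x)-f_{exact}(x)|\,dx \le \sum_{j=1}^{M} \max\{M-j,\, j-1\} \cdot \frac{n_j - n_{j-1}}{N},
\end{equation*}
and then identify $l_j = \frac{n_j - n_{j-1}}{N}$ as the length of the $j$-th class interval in $[0,1]$, so that the right-hand side is exactly $K = \sum_{i=1}^{M} l_i \max\{M-i,\, i-1\}$ after renaming the summation index, which completes the argument.

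The only genuinely delicate point is the second step: one must verify that $f_{exact}$ really is the single value $j$ throughout the $j$-th interval, so that the largest admissible deviation there is the distance from $j$ to the farthest label in $\{1,\dots,M\}$, rather than the global constant $M-1$. This is precisely what makes $K$ a tighter bound than $M-1$ whenever the extreme classes are under-represented among the model's predictions. Everything else — the monotonicity of the integral and the factor $w_j \le 1$ — is routine.
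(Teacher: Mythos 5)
Your proof is correct and rests on essentially the same idea as the paper's: take $w_j \le 1$ and bound each class's integral by a rectangle with base $l_j$ and height $\max\{M-j,\,j-1\}$, the largest possible gap between $f_{exact}\equiv j$ on that interval and a true label in $\{1,\dots,M\}$. If anything, your version is more rigorous than the paper's own argument, which only describes the extremal configuration (every observation misclassified into the farthest class, so that $w_j=1$ and each integral becomes exactly such a rectangle) and reads off the value $K$ there, whereas you establish the pointwise inequality $|f_{mod}(x)-f_{exact}(x)|\le\max\{M-j,\,j-1\}$ that makes the bound valid for an arbitrary classification.
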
 
\begin{proof} 
	The maximum value is reached when the worst classification is obtained, i.e. when all observations are associated to the fairest class.
	If this happens, the error interval is long as the class domain, so $w_j=1 \, \forall j=1,...,M$ and each integral is the sum is a rectangle with basis the class domain $l_j$ and height the maximum height reachable.
	
\end{proof}

\begin{definition}[Normalized index]
	\begin{equation*}
	I_n=\frac{1}{K}\sum_{j=0}^{M-1} w_j\int_{\frac{n_{j-1}}{N}}^{\frac{n_j}{N}} |(f_{mod}-f_{exact})(x)|dx
	\end{equation*}
	where $K$ is the maximum defined in the Proposition \ref{prop_max}.\\
	So $0 \leq I_n \leq 1$.
\end{definition}

In the previous example, $K=1.7$ and the corresponding value of the defined normalized index is $0.255$.

\begin{proposition}
	The accuracy is a special case of the index introduced in Definition \ref{index}.	
\end{proposition}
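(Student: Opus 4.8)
The plan is to realize accuracy, or equivalently its complement the misclassification rate, as the value of $I$ at a specific degenerate choice of its two ingredients. Accuracy equals $\frac{1}{N}\#\{i:\hat{y}_i=y_i\}$, and the only two features that separate it from $I$ are the ordinal magnitude $|f_{mod}-f_{exact}|$, which makes $I$ cost-sensitive, and the weights $w_j$, which inject the probability (uncertainty) information through the sorting of Definition~\ref{class_func}. Accuracy uses neither: it charges every misclassification the same unit cost and depends only on the $\argmax$ labels, not on the probabilities. So the strategy is to switch both features off and check that $I$ collapses to the error rate.

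First I would compute a single class term. On the range $[\frac{n_{j-1}}{N},\frac{n_j}{N})$ the perfect function is constant, $f_{exact}\equiv j$, while $f_{mod}$ takes the true label $y_i$ on the subinterval $[\frac{i-1}{N},\frac{i}{N})$ belonging to observation $i$. Hence
\begin{equation*}
\int_{\frac{n_{j-1}}{N}}^{\frac{n_j}{N}}|(f_{mod}-f_{exact})(x)|\,dx=\frac{1}{N}\sum_{i\in\text{class }j}|y_i-j|.
\end{equation*}
Replacing the ordinal distance $|y_i-j|$ by the nominal $0/1$ loss $\mathbf{1}[y_i\neq j]$ --- the equal-cost assumption behind accuracy, and exactly what happens automatically when $M=2$, since then $|y_i-j|\in\{0,1\}$ already equals $\mathbf{1}[y_i\neq j]$ --- turns this term into $m_j/N$, where $m_j=\#\{i\in\text{class }j:\,y_i\neq j\}$ is the number of misclassified observations assigned to class $j$.

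Second I would collapse the weights. Setting $w_j=1$ for every $j$ is the precise counterpart of discarding the probability/uncertainty information, which accuracy also ignores; it amounts to treating each class range as entirely ``in error'' rather than only its error interval. With $w_j\equiv 1$ and the $0/1$ loss,
\begin{equation*}
I=\sum_{j=1}^{M}\frac{m_j}{N}=\frac{1}{N}\sum_{j=1}^{M}m_j=1-\text{accuracy},
\end{equation*}
because each observation lies in exactly one estimated class, so $\sum_j m_j$ is the total count of misclassified instances. Thus $\text{accuracy}=1-I$ at this choice of ingredients (and correspondingly accuracy is recovered from $I_n$ up to the normalizing constant $K$ of Proposition~\ref{prop_max}), which is what I would present as the realization of accuracy as a special case.

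I expect the weights to be the delicate point, and I would stress it explicitly. For generic data $w_j=e_j/l_j<1$, so the untouched index is a cost- and uncertainty-weighted error count, not the plain error rate; the proposition is really the statement that accuracy sits at the corner of the parameter space where costs are equalized ($0/1$ loss) and the uncertainty weighting is switched off ($w_j\equiv1$). I would also be careful about orientation: the reduction yields the complement $1-\text{accuracy}$, consistent with the first Property giving $I=0$ exactly for perfect prediction, so the identification must be stated as $\text{accuracy}=1-I$ rather than $I=\text{accuracy}$.
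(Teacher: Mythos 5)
Your proposal is correct and follows essentially the same route as the paper's own proof: both identify accuracy's ``corner'' of the index by switching off the uncertainty weights ($w_j\equiv 1$) and reducing the ordinal cost to the $0/1$ case (which the paper does by fixing $M=2$, where $K=1$ and each error contributes $\frac{1}{N}$, so that $I_n=p_{err}$). Your write-up is in fact more careful on one point the paper glosses over --- the paper writes $acc=p_{err}$, conflating accuracy with the error rate, whereas you correctly state the identification as $\text{accuracy}=1-I$ --- but this is a refinement of the same argument, not a different one.
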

\begin{proof}
	The accuracy is $acc=p_{err}=\frac{\#\{\text{misclassified observations}\}}{N}$ i.e. the proportion of misclassified observations.\\
	Setting $M=2$, from the Proposition \ref{prop_max}, $K=1$.\\ 
	$max_x |f_{mod}(x)-f_{exact}(x)|=1$, each error weights $\frac{1}{N}$ if $w_1=w_2=1$ and $I_n=p_{err}$.
\end{proof}

\begin{propr}[Monotonicity] \label{prop_monot}
	Consider a classification $C$ with $\epsilon$ misclassification and $N$ observations.
	Operating a transformation of the classification $C$ in $C'$ where an observation right classified is changed in a misclassification, the index $I_n$ becomes higher.
\end{propr}
\begin{proof}
	In the classification $C'$, $\epsilon'$=$\epsilon+1$ are misclassified observations: the $\epsilon$ observations misclassified in $C$ plus a new misclassification. Suppose that the new misclassification is the observation $i$ that is classified in the class $j'$ instead of the real class $j$.\\
	All the components in the sum of the index $I_n$ remain unchanged except for the $j^{th}$, thus obtaining $I_n^j$.
	So
	\begin{equation*}
	I_n^j=w_j\int_{\frac{n_{j-1}}{N}}^{\frac{n_j}{N}} |f_{mod}(x)-f_{exact}(x)| dx
	\end{equation*}
	Looking at each of the two elements in the product:
	\begin{itemize}
		\item $w_j'\geq w_j$\\
		Two different cases are possible: if the probability associated to the $i^{th}$ observations is less or equal than the probability of the first error, the error interval $w_j'= w_j$; on the other hand, the error interval become larger, thus $w_j'> w_j$.
		\item $|f'_{mod}-f_{exact}| > |f_{mod}-f_{exact}|$\\
		In $C'$ there is one misclassification more than in $C$, so the distance between $f_{mod}$ and $f_{exact}$ increases. 
	\end{itemize}
	We can conclude that $I_n^{'j} \geq I_n^j$.
\end{proof}
We remark that in the Proposition \ref{prop_monot} the vice versa does not hold, i.e. if $I_{mod1}\geq I_{mod2}$ we can not make conclusion on the number of misclassified observations in the two classifications.\\

\section{Toy examples}
\label{toy}
In order to show how our index works with respect to the indexes proposed in the literature toy examples are reported in this section with the main aim of discussing the behaviour in terms of model selection of our index with respect to AUC, accuracy and MSE.\\ 
$Y$ is a target variable characterized by $M=3$ levels $y_i \in \{1,2,3\}$ and model 1 and model 2 are two competitive models under comparison. 

\subsection{First toy example}
In the first toy example we take into account the ordinal structure of the target variable $Y$.
Table \ref{table:Toy1a} and Table \ref{table:Toy1b} are the corresponding confusion matrices for model 1 and model 2.
It is clear that the model 2 makes a better classification than model 1.

\begin{table}[h]
	\begin{center}
		\begin{tabular}{c|c|ccc|}
			& & \multicolumn{3}{c|}{Actual} \\
			\hline
			&	& 1 & 2 & 3  \\
			\hline
			\parbox[t]{2mm}{\multirow{3}{*}{\rotatebox[origin=c]{90}{Predict}}}
			& 1 & 5 & 0 & 1 \\
			&2 & 0 &  7 & 0\\
			&	3 & 0 & 0  &7 \\
		\end{tabular}
	\end{center} 
	\caption{Confusion matrix model 1}
	\label{table:Toy1a}
\end{table}

\begin{table}[h]
	\begin{center}
		\begin{tabular}{c|c|ccc|}
			& & \multicolumn{3}{c|}{Actual} \\
			\hline
			&	& 1 & 2 & 3  \\
			\hline
			\parbox[t]{2mm}{\multirow{3}{*}{\rotatebox[origin=c]{90}{Predict}}} 
			& 1 & 5 & 1 & 0 \\
			& 2 & 0 &  6 & 0\\
			& 3 & 0 & 0  &8 \\
		\end{tabular}
	\end{center} 
	\caption{Confusion matrix model 2} 
	\label{table:Toy1b}
\end{table}

\begin{table}[h]
	\begin{center}
		\begin{tabular}{|c|c|c|c|c|c|}
			\hline
			Model & Proposed Index & Normalized Index & AUC & accuracy & MSE\\
			\hline
			1 & 0.083 & 0.051 & 0.956 & 0.950 & 0.200\\
			\hline
			2 & 0.042 & 0.025 &  0.956 & 0.950 & 0.050\\
			\hline
		\end{tabular}
	\end{center}
	\caption{Results} 
	\label{table:ResultsToy1}
\end{table}
For the sake of comparison, for each model the AUC, the accuracy, the MSE and our index are computed as summarized in Table \ref{table:ResultsToy1}.\\
We remark that looking at Table \ref{table:ResultsToy1} the values obtained for the AUC and the accuracy indexes for model 1 and model 2 are exactly equal, thus, in terms of model choice, model 1 and model 2 are indifferently. Our index highlights a difference in terms of performance between the two models under comparison and it selects model 2 as the best one.

\subsection{Second toy example}
The second toy example considers the probability assigned to each observation. 
In practical applications where we need also to evaluate how much uncertainty is associated to a prediction, the starting point considers the probability that the new observation belongs to the estimated class.\\
From Table \ref{table:ConfMatr2}, Model 1 and model 2 assign an observation of the first class to the second one. The first classification assigns a higher probability to the misclassified observation than the second. Then we can conclude that model 2 is better than model 1 for data at hands.\\

\begin{table}[h]
	\begin{center}
		\begin{tabular}{c|c|ccc|}
			& & \multicolumn{3}{c|}{Actual} \\
			\hline
			&	& 1 & 2 & 3  \\
			\hline
			\parbox[t]{2mm}{\multirow{3}{*}{\rotatebox[origin=c]{90}{Predict}}} & 1 & 5 & 0 & 0 \\
			& 2 & 0 &  7 & 0\\
			& 3 & 1 & 0  & 7 \\
			
		\end{tabular}
	\end{center} 
	\caption{Confusion matrix}
	\label{table:ConfMatr2}
\end{table}

From Table \ref{table:ResultsToy2} both models are equivalent in terms of MSE and accuracy, thus on the basis of classical measures model 1 and model 2 are indifferent. Our index reports different values for the models under comparison and select model 2 as the best one.
\begin{table}[h]
	\begin{center}
		\begin{tabular}{|c|c|c|c|c|c|}
			\hline
			Model & Proposed Index & Normalized Index & AUC & accuracy & MSE \\
			\hline
			1 & 0.083 & 0.051 & 0.956 & 0.950 & 0.200 \\
			\hline
			2 & 0.017 & 0.010 &  0.983 & 0.950 & 0.200 \\
			\hline
		\end{tabular}
	\end{center}
	\caption{Results} 
	\label{table:ResultsToy2}
\end{table}

\section{Empirical evaluation on simulated data}
\label{simulation}
In order to show how our proposal works in model selection, this section reports the empirical results achieved on a simulated dataset.\\ 
The simulated dataset is composed of three covariates obtained by a Monte Carlo simulation and an ordinal target variable with $M=5$, as reported in Table \ref{tab:1}. The sample size is $N=7500$.
\begin{table}[!ht]
	\centering
	\begin{tabular} {|c|ccccc|}
		\hline
		y & 1 & 2 & 3 & 4 & 5 \\
		\hline
		x1 & N(2,1.5) & N(3,1)  & N(4,1.5) & N(5,1)  & N(6,1)  \\
		\hline
		x2  & N(1,2.5)  & N(5,2)  & N(7,2.5)  & N(8.5,2)  & N(9.5,2) \\
		\hline
		x3  & &&U(0,3)&&\\	
		\hline
	\end{tabular}
	\caption{Simulated data structure.}
	\label{tab:1}
\end{table}

Five different models are under comparison:
\begin{itemize}
	\item Ordinal logistic regression (Ord Log),
	\item Classification tree (Tree),
	\item Support vector machine (SVM),
	\item Random forest (RFor),
	\item k- Nearest Neighbour (kNN).
\end{itemize}

For each model AUC, accuracy, MSE and our index are computed.

Table \ref{tab:2} reports, in terms of out of sample, the values of the metrics under comparison obtained for each model using a 10-fold cross validation.\\ 

\begin{table}[!ht]
	\centering
	\begin{tabular}{|c|c|c|c|c|c|}
		\hline
		Model & Proposed Index & Normalized index & AUC & Accuracy & MSE\\
		\hline
		Ord Log & 0.450 & 0.141 & 0.864	& 0.577 & 0.571\\
		\hline
		Tree & 0.487 & 0.146 & 0.835 & 0.585 & 0.654\\
		\hline
		SVM & 0.439 & 0.135 & 0.871 & 0.589 & 0.564\\
		\hline
		RFor & 0.493 & 0.151 & 0.855 & 0.569 & 0.672 \\
		\hline
		kNN & 0.003 & 0.001 & 0.999 & 0.977 & 0.024\\
		\hline
	\end{tabular}
	\caption{Model comparison}
	\label{tab:2}
	
\end{table}

For sake of clarity, Table \ref{tab:3} shows the resulting ranks for the models, using the results obtained for the four metrics under comparison.\\
\begin{table}[!ht]
	\centering
	\begin{tabular}{|c|c|c|c|c|}
		\hline
		Model & Proposed Index/Normalized & AUC & Accuracy & MSE\\
		\hline
		Ord Log & 3	& 3	& 4 & 3\\
		\hline
		Tree & 4 & 5 & 3 & 4\\
		\hline
		SVM & 2 & 2	& 2 & 2\\
		\hline
		RFor & 5 & 4 & 5 & 5 \\
		\hline
		kNN & 1 & 1 & 1 & 1\\
		\hline
	\end{tabular}
	\caption{Results in terms of ranking.}
	\label{tab:3}
\end{table}
We can see that the k-nearest neighbour is classified as the best model according to all the indexes employed for model choice. Furthermore, from table \ref{tab:2} the k-nearest neighbour outperforms the other models. The Support vector machine is considered the second-best model with respect to all performance indicators. 
The rest of the models under comparison are ranked differently with respect to the evaluation metrics adopted. 

\section{Conclusions}
\label{conclusions}
A new performance indicator is proposed to compare predictive classification models characterized by ordinal target variable.\\
Our index is based on a definition of a classification function and an error interval. A normalized version of the index is derived. The empirical evidence at hands underlined that our index discriminates better among different models with respect to classical measures available in the literature. \\
Our index can be used coupled with other metrics for model performance for model selection.\\
From a computational point of view a further idea of research will consider the implementation of our index in a new R package. In terms of application we think that our index could be directly incorporate in the process of assessment for predictive analytics.

\newpage

\end{document}